\documentclass[10pt,conference]{IEEEtran}

\usepackage{amsfonts}
\usepackage{amssymb}
\usepackage{amsmath}


\usepackage{algorithm}
\usepackage{algpseudocode}
\usepackage{algorithmicx}


\newcommand{\comment}[1]{}

\newtheorem{theorem}{Theorem}[section]
     \newtheorem{lemma}[theorem]{Lemma}
     
     \newtheorem{corollary}[theorem]{Corollary}

     \newcommand{\qed}{\nobreak \ifvmode \relax \else
           \ifdim\lastskip<1.5em \hskip-\lastskip
           \hskip1.5em plus0em minus0.5em \fi \nobreak
           \vrule height0.75em width0.5em depth0.25em\fi}


\newcounter{remark_ordering}
\newtheorem{remark}[remark_ordering]{Remark}


\begin{document}

\title{Performance Analysis for Data Compression Based Signal Classification Methods}

\author{
\authorblockN{Xudong Ma}
\authorblockA{Pattern Technology Lab LLC, Delaware, U.S.A.  \\
Email: xma@ieee.org}}


\comment{
\author{
\authorblockN{Giuseppe Caire}
\authorblockA{EE Department \\
University of Southern California \\
Los Angeles, CA 90089 \\
caire@usc.edu}
\and
\authorblockN{Marc Fossorier}
\authorblockA{EE Department \\
University of Hawaii at Manoa \\
Honolulu, HI 96822 \\
marc@spectra.eng.hawaii.edu}
\and
\authorblockN{Andrea Goldsmith}
\authorblockA{Department of Electrical Engineering  \\
Stanford University \\
Stanford, CA \\
Email: andrea@ee.stanford.edu }
\and
\authorblockN{Muriel Medard}
\authorblockA{Laboratory for Information and Decision Systems \\
MIT \\
Cambridge, MA \\
Email: medard@mit.edu}
\and
\authorblockN{Amin Shokrollahi}
\authorblockA{Lab. of Math. Algorithms \\
EPFL\\
Lausanne, Switzerland \\
Email: amin.shokrollahi@epfl.ch}
\and
\authorblockN{Ram Zamir}
\authorblockA{EE - Systems Dprt.\\
Tel Aviv University\\
Tel Aviv, Israel \\
Email: zamir@eng.tau.ac.il } }}

%

\maketitle

\comment{
\begin{abstract}
This paper provides the instructions for the preparation of papers
for submission to ISIT 2007 and relevant style file that produced
this page.
\end{abstract}

\section{Introduction}
The 2007 IEEE International Symposium on Information Theory will be held
at the Acropolis Congress and Exhibition Center in Nice, France,
from Sunday June 24 through Friday June 29, 2007.

\section{Submission and Review Process}
Papers will be reviewed on the basis of a manuscript of
sufficient detail to permit reasonable evaluation.
The manuscript should
{\bf not exceed five double-column pages,
with single line spacing, main text
font size no smaller than 10 points,
and at least 3/4 inch margins (about 18 mm)}.
The deadline for submission is {\bf January 8, 2007}, with
notification of decisions by {\bf March 25, 2007}.

The deadline and five page limit will be strictly
enforced. In view of the large number of submissions expected,
multiple submissions by the same author will receive especially
stringent scrutiny. All accepted papers will be allowed twenty
minutes for presentation.

\section{Proceedings}
Accepted papers will be published in full (up to
five pages in length) on CD-ROM. A hard-copy book of abstracts will
also be distributed at the Symposium to serve as a guide to the
sessions.

The deadline for the submission of the final camera-ready
paper is {\bf April 20, 2007}. Final manuscript guidelines will be
made available after the notification of decisions.

\section{Preparation of the Paper}
Only electronic submissions in form of a Postscript (PS) or Portable Document
Format (PDF) file will
be accepted. Most authors will prepare their papers with \LaTeX. The
\LaTeX\ style file (\verb#IEEEtran.cls#) and the \LaTeX\ source
(\verb#isit_example.tex#) that produced this page may be downloaded
from the ISIT 2007 web site (http://www.isit2007.org).
Do not change the style file in any way.
Authors using other means to prepare their manuscripts should attempt
to duplicate the style of this example as closely as possible.


The style of references, e.g.,
\cite{Shannon1948}, equations, figures, tables, etc., should be the
same as for the \emph{IEEE Transactions on Information Theory}. The
affiliation shown for authors should constitute a sufficient mailing
address for persons who wish to write for more details about the
paper.

\section{Electronic Submission}
The paper submission portal is EDAS:
\begin{verbatim}
http://edas.info/newPaper.php?c=5045&
\end{verbatim}
Check ISIT2007 website (http://www.isit2007.org) for relevant
announcements.

\section{Conclusion}
Never conclude a real information theoretic paper.
If you have to, the conclusion goes here.


\section*{Acknowledgment}
The authors would like to thank various sponsors for supporting their research.
In particular, we thank the TPC chairs of ISIT 2006 for
providing the \LaTeX\ templates for paper submission.



%
}

\begin{abstract}

In this paper, we present an information theoretic analysis of the
blind signal classification algorithm. We show that the algorithm is
equivalent to a Maximum A Posteriori (MAP) estimator based on
estimated parametric probability models. We prove a lower bound on
the error exponents of the parametric model estimation. It is shown 
that the estimated model parameters converge in probability to the
true model parameters except some small bias terms.

\end{abstract}

\section{Introduction}

In this paper, we consider the blind signal classification problems.
In the considered scenarios, a sequence of random signal samples
$X_1$, $X_2$, \ldots, $X_N$ is observed, where each signal sample is
a real number or a vector in a finite dimensional space. It is
assumed that the $N$ signal samples are generated by $J$ information
sources with different statistical properties. However, it is
unknown from which information source each signal sample $X_n$ is
emitted. The blind signal classification problems denote the
problems of estimating the membership of each signal sample to the
information sources. The signal classification problems find
applications in many areas of image processing, computer vision and
machine learning, for example, in image segmentation, and cluster
analysis. For background information in these applications, we refer
interested readers to \cite{sonka} \cite{xu} and references therein.

In \cite{ma09}, a novel algorithm for the signal classification
problems is proposed based on data compression. The algorithm is
based on the intuitive idea that optimal classification induces
optimal adaptive data compression. Therefore, the signal
classification problems can be formulated as optimization problems.
An analysis in an algorithmic viewpoint was also presented. It was
shown in the paper \cite{ma09} that a soft membership relaxation can
be used to reduce the computational complexity with asymptotic
vanishing optimality loss. Simulation results show that the
algorithm has nice performance.

It is well known that there exist close connections between
information theory and statistical inference. Especially, source
coding and data compression have been used in statistical inference
problems, such as prediction, estimation and modeling, see for
instance \cite{rissanen84, rissanen86, rissanen96, grunwald}.
However, there exists no discussion on using data compression for
classification and clustering until very recent. In
\cite{cilibrasi05, cilibrasi04}, a ``clustering by compression''
algorithm has been proposed. The approach in \cite{cilibrasi05,
cilibrasi04} is different from the approach in \cite{ma09} in terms
of their ways of using data compression. In \cite{cilibrasi05,
cilibrasi04}, the data compression methods are used to compute
distances between data items. The clustering results are then
obtained by using conventional methods based on the computed
distances.

In this paper, we present an information theoretical analysis to
justify the intuitive idea of the blind signal classification
algorithm in \cite{ma09}. It is shown that the blind signal
classification algorithm is equivalent to a Maximum A Posteriori
(MAP) estimator based on estimated parametric probability models. We
also discuss the error exponents of the model parameter estimation.
It is shown that the estimated model parameters converge to the true
model parameters in probability. These theoretical discussions
suggest that the blind signal classification algorithm has nice
performance.

The discussions in this paper focus on the cases that the
information sources are independent and identically distributed
(i.i.d.) Gaussian, and there are two information sources. Even
though, more sophisticated cases are not covered in this paper, the
discussions presented here can provide useful insights into these
more general cases. The discussions can be easily generalized to the
cases of multiple information sources. With additional works, the
results can also be generalized to the cases of non-Gaussian,
Markov, stationary or ergodic information sources.

Notation: we use $\lfloor \cdot \rfloor$ to denote the floor
function, that is, $\lfloor x \rfloor$ is the largest integer
smaller than $x$. We use $\ln(\cdot)$ to denote the logarithmic
function with base $e$. We use $H(P)$ and $D(P||Q)$ to denote the
entropy function and information divergence respectively. If $P$ and
$Q$ are discrete probability mass functions, then
\begin{align}
& H(P)=\sum_{x\in\mathcal{X}}-P(x)\ln(P(x)),\notag \\
& D(P||Q)=\sum_{x\in\mathcal{X}}P(x)\ln \frac{P(x)}{Q(x)},
\end{align}
where $\mathcal{X}$ is the discrete alphabet. If $P$ and $Q$ are
probability density functions, then
\begin{align}
& H(P)=\int_{-\infty}^{\infty}-P(x)\ln(P(x))dx,\notag \\ &
D(P||Q)=\int_{-\infty}^{\infty}P(x)\ln \frac{P(x)}{Q(x)}dx.
\end{align}
If $f(N)$ and $g(N)$ are two functions of the number $N$, we use
$f(N)\sim g(N)$ to denote that
\begin{align}
\lim_{N\rightarrow \infty}
\frac{1}{N}\ln\left(\frac{f(N)}{g(N)}\right)=0.
\end{align}
For a sequence $x_1x_2\ldots x_N$ over a discrete alphabet $\mathcal
X$, the type of the sequence is defined as the corresponding
empirical distribution $P$ over $\mathcal{X}$, that is, $P(a)$ is
equal to the fraction of $x_i$ taking value $a$. For a type $P$, the
type class $\mathcal{T}(P,N)$ is the set of sequences with length
$N$ and type  $P$. We use $\mathbb{P}(\mathcal{T}(P,N))$ to denote
the probability that the type of the random sequence $x_1x_2\ldots
x_N$ is $P$.

The rest of this paper is organized as follows. We review the data
compression based signal classification algorithm in Section
\ref{section_classification_algorithm}. We discuss the necessary
conditions for the optimal solutions of the blind signal
classification algorithm in Section \ref{sec_optimal_condition}. We
discuss the error exponents of the parameter estimation in Section
\ref{sec_single_letter_bound}. The concluding remarks are presented
in Section \ref{sec_conclusion}.


%

\section{Blind Signal Classification Algorithm}
\label{section_classification_algorithm}

In this paper, we consider the scenario, where a sequence of random
real-valued signal samples $X_1,X_2,\ldots,X_N$ is observed. Each
signal sample $X_n$ is independently drawn from one of the several
i.i.d. Gaussian information sources. The probability density
function is
\begin{align}
P(X_n) =
\sum_{i=1}^{J}\frac{\alpha_i}{\sqrt{2\pi}\sigma_i}\exp\left(\frac{-(x_n-\mu_i)^2}{2\sigma_i^2}\right),
\end{align}
where $J$ is the total number of information sources, $\alpha_i$ is
the probability that $X_n$ is drawn from the $i$-th information
source, and $\mu_i$, $\sigma_i$ are the Gaussian distribution
parameters of the $i$-th information source. The goal is to estimate
the membership of each signal sample $x_n$ to the $J$ information
sources.

The blind signal classification  algorithm in \cite{ma09} is based
on a data compression argument that accurate signal classification
results in accurate signal modeling and efficient data compression.
Therefore, the signal samples should be classified, so that the
coding efficiency is maximized. Let $m_{ni}$ denote the membership
variable for the $n$-th signal sample with respect to the $i$-th
class,
\begin{align}
m_{ni}=\left\{\begin{array}{ll} 1, & \mbox{if }x_n\mbox{ is
classified
into the }i\mbox{-th class} \\
0, & \mbox{otherwise}
\end{array}\right.
\end{align}
The algorithm searches for the optimal values $m_{ni}$ such that the
following objective function is minimized,
\begin{align}
\label{algorithm_objective_function}
G=\exp\left(2H(\alpha_1,\ldots,\alpha_J)\right)\prod_{i=1}^{J}\left(\sigma_i^2\right)^{\alpha_i}
\end{align}
where $\alpha_i$ is the fraction of signal samples that are
classified into the $i$-th class,  $\sigma_i^2$ is the variance of
signal samples in the $i$-th class, $H(\alpha_1,\ldots,\alpha_j)$ is
the entropy function in nats,
\begin{align}
& \alpha_i= \frac{\sum_n m_{ni}}{N},\,\,\, \mu_i= \frac{\sum_n
m_{ni}x_n}{\alpha_iN},\,\,\, \sigma_i^2= \frac{\sum_n
m_{ni}(x_n-\mu_i)^2}{\alpha_iN}.
\end{align}
The objective function $G$ relates to the so called {\it
classification gain} and adaptive coding efficiency \cite{ma09}.

In \cite{ma09}, the {\it hard} membership variables $m_{ni}$ are
relaxed into {\it soft} membership variables. That is, $m_{ni}$ can
take real values, such that $ \sum_i m_{ni}=1$, $0\leq m_{ni}\leq
1$. It is proven in \cite{ma09} that the optimality loss due to the
relaxation vanishes asymptotically.

\section{Necessary Condition for Optimization Solutions}

\label{sec_optimal_condition}

In this section, we show a necessary condition for the solution in
the above blind signal classification method with soft membership
variables. It turns out that useful insights can be gained from the
necessary condition.

Let us assume that the probability density function of $X_n$ is
$P^\ast (X_n)  = \alpha_1^\ast f_1^\ast(x) + \alpha_2^\ast
f_2^\ast(x) $, where
\begin{align}
f_i^\ast(x)  =
\frac{1}{\sqrt{2\pi}\sigma_i^\ast}\exp\left(\frac{-(x-\mu_i^\ast)^2}{2(\sigma_i^\ast)^2}\right).
\end{align}
Let $\widehat{m}_{ni}$ denote one global minimizer of the
optimization programming in the blind signal classification method.
Let $\widehat{\alpha}_1$, $\widehat{\alpha}_2$, $\widehat{\mu}_1$,
$\widehat{\mu}_2$, $\widehat{\sigma}_1$, $\widehat{\sigma}_2$ denote
the distribution parameters corresponding to the optimization
solution $\widehat{m}_{ni}$.

\begin{theorem}
\label{theorem_optimal_char_one} The optimal solution
$\widehat{m}_{ni}$ satisfies the following condition.
\begin{align}
\widehat{m}_{n1} = \left\{\begin{array}{ll} 1, & \mbox{ if }
\widehat{\alpha}_1\widehat{ f}_1(x_n)>\widehat{\alpha}_2
\widehat{ f}_2(x_n) \\
0, & \mbox{ if }\widehat{\alpha}_1\widehat{ f}_1(x_n)<
\widehat{\alpha}_2\widehat{ f}_2(x_n)
\end{array}\right.
\end{align}
where $\widehat{f}_1$ and $\widehat{f}_2$ are the Gaussian
probability density functions corresponding to the parameters
$\widehat{\mu}_1,\widehat{\mu}_2,\widehat{\sigma}_1,\widehat{\sigma}_2$,
\begin{align}
\widehat{f}_i (x) = \frac{1}{\sqrt{2\pi}\widehat{\sigma}_i}
\exp\left(\frac{-(x-\widehat{\mu}_i)^2}{2\widehat{\sigma}_i^2}\right).
\end{align}
\end{theorem}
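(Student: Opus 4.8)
The plan is to treat the soft-membership optimization as a smooth constrained minimization problem in the variables $m_{ni}$ and derive the first-order (KKT) stationarity conditions at the global minimizer $\widehat m_{ni}$. First I would rewrite the objective $G$ in logarithmic form: minimizing $G$ is equivalent to minimizing $\ln G = 2H(\widehat\alpha_1,\widehat\alpha_2) + \sum_i \widehat\alpha_i \ln \widehat\sigma_i^2$, where $\widehat\alpha_i = \frac1N\sum_n m_{ni}$, $\widehat\mu_i = \frac{\sum_n m_{ni}x_n}{N\widehat\alpha_i}$, and $\widehat\sigma_i^2 = \frac{\sum_n m_{ni}(x_n-\widehat\mu_i)^2}{N\widehat\alpha_i}$. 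The key preliminary computation is the partial derivative $\partial \ln G / \partial m_{nj}$. Here I would exploit the standard fact that, because $\widehat\mu_i$ and $\widehat\sigma_i^2$ are themselves the minimizers of the within-class squared error for fixed memberships, the derivative terms coming from the implicit dependence of $\widehat\mu_i,\widehat\sigma_i^2$ on $m_{nj}$ vanish (envelope theorem); so only the explicit dependence through $\widehat\alpha_i$ in the entropy term and through the $\widehat\alpha_i$-weighting and the residual $(x_n-\widehat\mu_i)^2$ in the variance term survive. Carrying this out should give something like $\frac{\partial \ln G}{\partial m_{nj}} = \frac{1}{N}\left[ -2\ln\widehat\alpha_j - 2 + \ln\widehat\sigma_j^2 + \frac{(x_n-\widehat\mu_j)^2}{\widehat\sigma_j^2}\right]$ up to a term that is common to all $j$ (and hence irrelevant under the constraint $\sum_j m_{nj}=1$).

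Next I would write down the Lagrangian for the constraints $\sum_j m_{nj} = 1$ (multiplier $\lambda_n$) and $0 \le m_{nj} \le 1$, and invoke the KKT conditions. For each fixed $n$, stationarity says that the "cost" $c_{nj} := \frac{\partial \ln G}{\partial m_{nj}}$ of assigning sample $n$ to class $j$ must satisfy $c_{nj} = \lambda_n$ whenever $0 < \widehat m_{nj} < 1$, $c_{nj} \ge \lambda_n$ whenever $\widehat m_{nj} = 0$, and $c_{nj} \le \lambda_n$ whenever $\widehat m_{nj} = 1$. Since each $X_n$ contributes to the objective only through the pair $(m_{n1},m_{n2})$ with $m_{n1}+m_{n2}=1$, this is a one-dimensional LP in $m_{n1}$ per sample, and the optimum is at an endpoint unless $c_{n1} = c_{n2}$. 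Then I would show that $c_{n1} < c_{n2}$ is equivalent to $\widehat\alpha_1 \widehat f_1(x_n) > \widehat\alpha_2 \widehat f_2(x_n)$: indeed $-c_{nj}$ equals (twice) $\ln\left(\widehat\alpha_j / (\sqrt{2\pi}\,\widehat\sigma_j)\right) - \frac{(x_n-\widehat\mu_j)^2}{2\widehat\sigma_j^2} + \text{const}$, which is exactly $2\ln\widehat\alpha_j + 2\ln\widehat f_j(x_n)$ plus a $j$-independent constant; so the lower-cost class is the one with larger $\widehat\alpha_j\widehat f_j(x_n)$. Comparing signs then yields the claimed rule: $\widehat m_{n1}=1$ when $\widehat\alpha_1\widehat f_1(x_n) > \widehat\alpha_2\widehat f_2(x_n)$ and $\widehat m_{n1}=0$ in the reverse case.

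I expect the main obstacle to be the rigorous justification of the envelope-theorem step, i.e., that the implicit dependence of $\widehat\mu_i$ and $\widehat\sigma_i^2$ on $m_{nj}$ does not contribute to $\partial \ln G/\partial m_{nj}$. This requires checking that $\partial \ln G / \partial \widehat\mu_i = 0$ and $\partial \ln G / \partial \widehat\sigma_i^2 = 0$ at the optimal moment values for fixed memberships — the first is immediate since $\widehat\mu_i$ enters only through $\widehat\sigma_i^2 = \frac1{N\widehat\alpha_i}\sum_n m_{ni}(x_n-\widehat\mu_i)^2$ which is minimized at the empirical mean, and the second follows because $\widehat\sigma_i^2$ in the definition is already pinned to that minimizing value, so there is no free $\widehat\sigma_i^2$ to differentiate. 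A careful bookkeeping of which quantities are free variables and which are determined functions of the memberships resolves this. A secondary technical point is handling the boundary/degenerate cases (a class becoming empty, $\widehat\alpha_i \to 0$, or $\widehat\sigma_i^2 \to 0$) where $\ln G$ or its derivative blows up; I would argue that at a global minimizer these degeneracies cannot occur (or can be excluded by a separate argument), so the KKT analysis applies on the interior where $\ln G$ is smooth. The equality case $\widehat\alpha_1\widehat f_1(x_n) = \widehat\alpha_2\widehat f_2(x_n)$ is deliberately left unspecified in the statement, consistent with the LP having a non-unique optimum there.
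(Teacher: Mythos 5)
Your proposal is correct and follows essentially the same route as the paper: compute $\partial \ln G/\partial m_{n1}$ (accounting for the induced dependence of $\widehat\alpha_i,\widehat\mu_i,\widehat\sigma_i^2$ on the memberships), observe that its sign is determined by comparing $\widehat\alpha_1\widehat f_1(x_n)$ with $\widehat\alpha_2\widehat f_2(x_n)$, and conclude via the KKT conditions that the minimizer sits at the boundary $0$ or $1$ accordingly. Your Lagrangian bookkeeping with a multiplier $\lambda_n$ is just a reparametrization of the paper's direct substitution $m_{n2}=1-m_{n1}$, and your per-class cost $c_{nj}$ matches the paper's derivative up to a $j$-independent constant that cancels in the comparison.
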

\begin{proof}\emph{(sketch)}
Consider $G$ as a function solely determined by $m_{ni}$. Taking a
derivative, we have
\begin{align}
 \frac{\partial (\ln(G))}{\partial m_{n1}}  = &
 \frac{1}{N}\left[ \ln\left( \widehat{\sigma}_1^2 \right)
-2\ln(\widehat{\alpha}_1)
 + \frac{(x_n-\widehat{\mu}_1)^2}{\widehat{\sigma}_1^2}\right]
\notag \\ & - \frac{1}{N}\left[ \ln\left( \widehat{\sigma}_2^2
\right)-2\ln(\widehat{\alpha}_2)  +
\frac{(x_n-\widehat{\mu}_2)^2}{\widehat{\sigma}_2^2}\right] \notag
\\
  = &
 \frac{2}{N}
\ln \left[\sqrt{2\pi} \widehat{\alpha}_2\widehat{f}_2(x_n) \right] -
\frac{2}{N} \ln \left[\sqrt{2\pi}
\widehat{\alpha}_1\widehat{f}_1(x_n) \right]
\end{align}
Therefore,
\begin{align}
& \frac{\partial \ln G}{\partial m_{n1}} <0,\,\,\,\mbox{ if
}\widehat{\alpha}_1\widehat{ f}_1(x_n)> \widehat{\alpha}_2\widehat{
f}_2(x_n) \\
& \frac{\partial \ln G}{\partial m_{n1}}
>0,\,\,\,\mbox{ if }\widehat{\alpha}_1\widehat{ f}_1(x_n)<
\widehat{\alpha}_2\widehat{ f}_2(x_n)
\end{align}
The theorem then follows from the KKT condition \cite{cong01}.
\end{proof}

\begin{corollary}
\label{theorem_optimal_char_two} Define three subsets $\mathcal{A},
\mathcal{B}, \mathcal{C}$ of real numbers as follows, ${\mathcal
A}=\left\{x|\widehat{\alpha}_1\widehat{f}_1(x)>\widehat{\alpha}_2\widehat{f}_2(x)\right\},
{\mathcal
B}=\left\{x|\widehat{\alpha}_1\widehat{f}_1(x)=\widehat{\alpha}_2\widehat{f}_2(x)\right\},
{\mathcal
C}=\left\{x|\widehat{\alpha}_1\widehat{f}_1(x)<\widehat{\alpha}_2\widehat{f}_2(x)\right\}$.
We write $n\in \mathcal{A}$ ($n\in\mathcal{B}$, $n\in\mathcal{C}$),
if $x_n\in\mathcal{A}$ ($x_n\in\mathcal{B}$,  $x_n\in\mathcal{C}$).
Then, the following statements hold.
\begin{align}
\label{optimal_cond_5} \widehat{\alpha}_1= \frac{\sum_{n\in
{\mathcal A}} 1 + \sum_{n\in{\mathcal B}}\widehat{m}_{n1}}{N},
\,\,\, \widehat{\alpha}_2= \frac{\sum_{n\in {\mathcal C}} 1 +
\sum_{n\in{\mathcal B}}\widehat{m}_{n2}}{N},
\end{align}
\begin{align}
\label{optimal_cond_7} \widehat{\mu}_1=\frac{\sum_{n\in {\mathcal
A}}x_n+\sum_{n\in{\mathcal
B}}\widehat{m}_{n1}x_n}{\widehat{\alpha}_1 N},
\end{align}
\begin{align}
\widehat{\mu}_2=\frac{\sum_{n\in {\mathcal
C}}x_n+\sum_{n\in{\mathcal
B}}\widehat{m}_{n2}x_n}{\widehat{\alpha}_2 N},
\end{align}
\begin{align}
\label{optimal_cond_9} \widehat{\sigma}_1=\frac{\sum_{n\in {\mathcal
A}} (x_n-\widehat{\mu}_1)^2+ \sum_{n\in{\mathcal
B}}\widehat{m}_{n1}(x_n-\widehat{\mu}_1)^2 }{\widehat{\alpha}_1  N},
\end{align}
\begin{align}
\label{optimal_cond_10} \widehat{\sigma}_2=\frac{\sum_{n\in
{\mathcal C}} (x_n-\widehat{\mu}_2)^2+ \sum_{n\in{\mathcal
B}}\widehat{m}_{n2}(x_n-\widehat{\mu}_2)^2 }{\widehat{\alpha}_2  N}.
\end{align}
\end{corollary}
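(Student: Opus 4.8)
\emph{Proof proposal.} The plan is to obtain each of the five identities by direct substitution of the characterization in Theorem~\ref{theorem_optimal_char_one} into the defining formulas for the cluster parameters. Recall that the parameters associated with a membership assignment are
\begin{align}
\alpha_i = \frac{\sum_n m_{ni}}{N}, \quad \mu_i = \frac{\sum_n m_{ni} x_n}{\alpha_i N}, \quad \sigma_i^2 = \frac{\sum_n m_{ni}(x_n-\mu_i)^2}{\alpha_i N},\notag
\end{align}
and that $\mathcal{A},\mathcal{B},\mathcal{C}$ partition the index set $\{1,\ldots,N\}$ according to the sign of $\widehat{\alpha}_1\widehat{f}_1(x_n)-\widehat{\alpha}_2\widehat{f}_2(x_n)$.

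First I would split each sum $\sum_n$ in the displays above as $\sum_{n\in\mathcal{A}}+\sum_{n\in\mathcal{B}}+\sum_{n\in\mathcal{C}}$. Next I invoke Theorem~\ref{theorem_optimal_char_one}: for $n\in\mathcal{A}$ we have $\widehat{m}_{n1}=1$ and hence $\widehat{m}_{n2}=0$ (from the soft-membership constraint $\widehat{m}_{n1}+\widehat{m}_{n2}=1$); for $n\in\mathcal{C}$ we have $\widehat{m}_{n1}=0$ and $\widehat{m}_{n2}=1$; for $n\in\mathcal{B}$ the KKT stationarity condition gives no information (the derivative $\partial\ln G/\partial m_{n1}$ vanishes there), so the corresponding terms $\widehat{m}_{n1}$, $\widehat{m}_{n2}$ are simply retained. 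Substituting these values collapses the $\mathcal{A}$- and $\mathcal{C}$-sums into sums of constants or into zero, and leaves the $\mathcal{B}$-sums weighted by the unknown $\widehat{m}_{n1},\widehat{m}_{n2}$. This yields (\ref{optimal_cond_5}) for $\widehat{\alpha}_1$ and $\widehat{\alpha}_2$ immediately, and likewise (\ref{optimal_cond_7})--(\ref{optimal_cond_10}) for $\widehat{\mu}_i$ and $\widehat{\sigma}_i^2$ after multiplying through by $\widehat{\alpha}_i N$.

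There is essentially no analytical obstacle here; the only point requiring care is conceptual rather than computational, namely recognizing that the boundary set $\mathcal{B}$ cannot be discarded at this stage (the corollary does not assert $\mathcal{B}$ is empty or null), and that Theorem~\ref{theorem_optimal_char_one} genuinely pins down the membership values only off $\mathcal{B}$, which is exactly why the stated identities carry explicit residual $\sum_{n\in\mathcal{B}}$ terms. Everything else is bookkeeping.
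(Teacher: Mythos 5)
Your proposal is correct and is exactly the argument the paper intends (the corollary is stated without proof precisely because it is this direct substitution of Theorem~\ref{theorem_optimal_char_one} into the defining formulas for $\alpha_i$, $\mu_i$, $\sigma_i^2$, splitting each sum over the partition $\mathcal{A}\cup\mathcal{B}\cup\mathcal{C}$ and using $\widehat{m}_{n1}+\widehat{m}_{n2}=1$). Your remark that the $\mathcal{B}$-terms must be retained because the theorem is silent there is the right conceptual point.
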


\begin{remark}
Theorem \ref{theorem_optimal_char_one} shows that the data
compression based signal classification method is equivalent to the
MAP estimation based on the estimated parametric probability models.
Even though the probability model estimation is just a by-product of
the classification algorithm, the accuracy of model estimation is
critical to the performance of the algorithm.
\end{remark}

\section{Error Exponent of Parameter Estimation}

\label{sec_single_letter_bound}

In this section, we investigate the accuracy of the parametric
probability model estimation in the proposed signal classification
method by using the method of types \cite{csiszar98}. We need to
introduce several auxiliary discrete probability distributions. Let
$M_N, L_N,W_N$ be numbers, which only depend on the number of signal
samples $N$,
\begin{align} M_N=c
(N)^{\frac{1}{2}+\zeta},\,\,\,L_N =
2\left\lfloor\frac{N^{1-\eta}}{\log
N}\right\rfloor+1,\,\,\,W_N=\frac{2M_N}{L_N},
\end{align}
where $c,\zeta,\eta$ are positive constants, and $\zeta+\eta<1/2$.
Let $a_k$ denote the number of signal samples, which fall in the
interval $\left[(k-1/2)W_N,(k+1/2)W_N\right]$. Let $\mathcal{O}_N$
denote the random event that $|x_n|\geq M_N$ for some $n$, $1\leq n
\leq N$. If $\mathcal{O}_N$ does not occur, then
$\{\ldots,a_k/N,\ldots \}$ is a well defined empirical probability
mass function, where
\begin{align}\frac{-L_N+1}{2}\leq k\leq
\frac{L_N-1}{2}.\end{align} We write $k\in \mathcal{A}$, if the
interval $\left[(k-1/2)W_N,(k+1/2)W_N\right]\subset\mathcal{A}$. We
write $k\in \mathcal{C}$, if
$\left[(k-1/2)W_N,(k+1/2)W_N\right]\subset\mathcal{C}$. We write
$k\in \mathcal{B}$, if $\left[(k-1/2)W_N,(k+1/2)W_N\right]\cap
\mathcal{B}\neq \emptyset$. If $k\in \mathcal{B}$, we define $c_k$
as,
\begin{align}
c_k=\frac{\sum_{n=1}^{N}m_{n1}I\left(x_n\in
\left[(k-1/2)W_N,(k+1/2)W_N\right]\right)} {
\sum_{n=1}^{N}I\left(x_n\in
\left[(k-1/2)W_N,(k+1/2)W_N\right]\right)},
\end{align}
where $I(\cdot)$ is the indicator function.

Let $\Theta$ denote the 6-tuple
$\{\alpha_1,\alpha_2,\mu_1,\mu_2,\sigma_1,\sigma_2\}$. We use
$P(x;\Theta)$ to denote the mixture Gaussian distribution,
\begin{align} \left(\sum_{i=1}^{2}
\frac{\alpha_i}{\sqrt{2\pi}\sigma}_i\exp\left(\frac{-(x-\mu_i)^2}{2\sigma_i^2}\right)
\right).
\end{align}
We use $P_N(k;\Theta)$ to denote the following discrete probability
distribution over the same alphabet of $(\ldots,a_k/N,\ldots)$,
\begin{align}
& P_N(k;\Theta)= \notag \\ & c_P \int_{(k-1/2)W_N}^{(k+1/2)W_N}
\left(\sum_{i=1}^{2}
\frac{\alpha_i}{\sqrt{2\pi}\sigma_i}\exp\left(\frac{-(x-\mu_i)^2}{2\sigma_i^2}\right)
\right)dx,
\end{align}
where $c_P$ is a normalization constant, $c_P\rightarrow 1$, as
$N\rightarrow \infty$.

\begin{lemma}
\label{outlier_lemma}
\begin{align}
\mathbb{P}(\mathcal{O}_N)\leq & \sum_{i=1}^{2} \exp\left( -
\frac{\left(c(N)^{(\frac{1}{2}+\zeta})+\mu_i^\ast\right)^2}{2(\sigma_i^\ast)^2}
+\ln N \right) \notag \\
& + \sum_{i=1}^{2} \exp\left( -
\frac{\left(c(N)^{(\frac{1}{2}+\zeta})-\mu_i^\ast\right)^2}{2(\sigma_i^\ast)^2}
+\ln N \right)
\end{align}
\end{lemma}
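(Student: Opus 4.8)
The plan is to reduce the claim to a union bound over the $N$ samples combined with a standard one-sided Gaussian tail estimate. Since $X_1,\ldots,X_N$ are i.i.d.\ with density $P^\ast(x)=\alpha_1^\ast f_1^\ast(x)+\alpha_2^\ast f_2^\ast(x)$, the event $\mathcal{O}_N$ is the union over $n$ of the events $\{|X_n|\geq M_N\}$, so
\begin{align}
\mathbb{P}(\mathcal{O}_N) \;\leq\; N\,\mathbb{P}\left(|X_1|\geq M_N\right) \;=\; N\left[\mathbb{P}(X_1\geq M_N)+\mathbb{P}(X_1\leq -M_N)\right].
\end{align}
It therefore suffices to bound the two single-sample tail probabilities and then reinsert the factor $N=\exp(\ln N)$ inside the exponents.

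For the single-sample tails I would decompose along the mixture. Writing $Z_i$ for a $\mathcal{N}(\mu_i^\ast,(\sigma_i^\ast)^2)$ variable,
\begin{align}
\mathbb{P}(X_1\geq M_N)=\sum_{i=1}^{2}\alpha_i^\ast\,\mathbb{P}(Z_i\geq M_N), \qquad \mathbb{P}(X_1\leq -M_N)=\sum_{i=1}^{2}\alpha_i^\ast\,\mathbb{P}(Z_i\leq -M_N).
\end{align}
Applying the Gaussian tail bound $\mathbb{P}(Z_i\geq t)\leq \exp\!\left(-\tfrac{(t-\mu_i^\ast)^2}{2(\sigma_i^\ast)^2}\right)$ for $t\geq\mu_i^\ast$, and symmetrically $\mathbb{P}(Z_i\leq -t)\leq \exp\!\left(-\tfrac{(t+\mu_i^\ast)^2}{2(\sigma_i^\ast)^2}\right)$, then using $\alpha_i^\ast\leq 1$ and substituting $M_N=cN^{\frac12+\zeta}$, one obtains precisely the two sums in the statement after folding the $N$ in front into $+\ln N$ inside each exponent.

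The argument is essentially routine; the only small points requiring care are (i) verifying that $M_N$ eventually exceeds $|\mu_i^\ast|$ so the one-sided Gaussian bound is legitimate — this holds since $M_N\to\infty$, and for the finitely many small $N$ the right-hand side is non-vacuous only when $M_N$ is already large enough, so it is harmless — and (ii) bookkeeping of the mixture weights, which are dominated by one. I do not expect any genuinely hard step; the lemma is a preliminary tail estimate used to control the outlier event $\mathcal{O}_N$ in the subsequent type-counting arguments.
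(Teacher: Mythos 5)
Your proposal is correct and follows essentially the same route as the paper: a union bound over the $N$ samples, decomposition over the two mixture components with the weights bounded by one, and the Chernoff-type Gaussian tail estimate $Q(x)\leq \exp(-x^2/2)$, with the factor $N$ folded into the exponent as $+\ln N$. Your remark that the one-sided tail bound requires $M_N$ to eventually exceed $|\mu_i^\ast|$ is a legitimate point of care that the paper's proof passes over silently, but it does not change the substance of the argument.
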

\begin{proof}
\begin{align}
& \mathbb{P}(\mathcal{O}_N)\stackrel{(a)}{\leq}
\sum_{n=1}^{N}\mathbb{P}(|X_n|\geq
M_N) \notag \\
& = \sum_{n=1}^{N}\sum_{i=1}^{2}\mathbb{P}(|X_n|\geq M_N|x_n\mbox{is
of class } i)\mathbb{P}(x_n\mbox{ is of class }i) \notag \\
& \leq \sum_{n=1}^{N}\sum_{i=1}^{2}\mathbb{P}(|X_n|\geq
M_N|x_n\mbox{is
of class } i) \notag \\
& \leq N \sum_{i=1}^{2}
Q\left(\frac{M_N+\mu_i^\ast}{\sigma_i^\ast}\right) + N
\sum_{i=1}^{2} Q\left(\frac{M_N-\mu_i^\ast}{\sigma_i^\ast}\right)
\notag \\
& \stackrel{(b)}{\leq} \sum_{i=1}^{2} \exp\left( -
\frac{\left(c(N)^{(\frac{1}{2}+\zeta})+\mu_i^\ast\right)^2}{2(\sigma_i^\ast)^2}
+\ln N \right), \notag \\
& + \sum_{i=1}^{2} \exp\left( -
\frac{\left(c(N)^{(\frac{1}{2}+\zeta})-\mu_i^\ast\right)^2}{2(\sigma_i^\ast)^2}
+\ln N \right),
\end{align}
where, $Q(\cdot)$ denotes the well known Gaussian tail function, (a)
follows from the union bound, and (b) follows from the well known
Chernoff bound $Q(x)\leq \exp(-x^2/2)$ (see for example
\cite[Section 2-1-5]{proakis}).
\end{proof}

\begin{lemma}
\label{type_class_bound} Let
$\widehat{\Theta}=\{\widehat{\alpha}_1,\widehat{\alpha}_2,\widehat{\mu}_1,\widehat{\mu}_2,\widehat{\sigma}_1,\widehat{\sigma}_2\}$
be the estimated model parameters. Assume that the random event
$\mathcal{O}_N$ does not occur. Then, the following bound holds,
which relates the type $\{a_k/N\}$ to the estimated probability
model parameters.
\begin{align}
 &
D\left(\frac{a_k}{N}||P_N(k;\widehat{\Theta})\right)+H\left(\frac{a_k}{N}\right)
\notag \\
& \leq   \frac{\widehat{\alpha}_1}{2}\ln(2\pi e
\widehat{\sigma}_1^2) + \frac{\widehat{\alpha}_2}{2}\ln(2\pi e
\widehat{\sigma}_2^2) + H(\widehat{\alpha}_1,\widehat{\alpha}_2)
\notag \\ & +
\left(\frac{\widehat{\alpha}_1}{2\widehat{\sigma}_1^2}+\frac{\widehat{\alpha}_2}{2\widehat{\sigma}_2^2}\right)W_N^2
 - \ln W_N -\ln c_P.
\end{align}
\end{lemma}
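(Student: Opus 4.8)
The plan is to rewrite the left-hand side as a cross entropy and then lower bound each bin probability by the contribution of its dominant mixture component. When $\mathcal{O}_N$ does not occur, every $x_n$ lies in $[-M_N,M_N]$, so $\{a_k/N\}$ is a genuine probability mass function on the same finite alphabet as $P_N(\cdot;\widehat{\Theta})$ and $\sum_k a_k = N$; since $D(P||Q)+H(P)=-\sum_x P(x)\ln Q(x)$ for probability mass functions on a common alphabet, the quantity to be bounded equals $-\sum_k \frac{a_k}{N}\ln P_N(k;\widehat{\Theta})$. Write $\widehat{P}(x)=\widehat{\alpha}_1\widehat{f}_1(x)+\widehat{\alpha}_2\widehat{f}_2(x)$, set $I_k=[(k-1/2)W_N,(k+1/2)W_N]$ with center $y_k=kW_N$, and let $k(n)$ be the index with $x_n\in I_{k(n)}$; recall $P_N(k;\widehat{\Theta})=c_P\int_{I_k}\widehat{P}(x)\,dx$, so it suffices to lower bound $\int_{I_k}\widehat{P}$ uniformly in $k$.

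For each $k$ and each $j\in\{1,2\}$ the trivial bound $\int_{I_k}\widehat{P}\ge\widehat{\alpha}_j\int_{I_k}\widehat{f}_j$ gives $-\ln\int_{I_k}\widehat{P}\le-\ln\widehat{\alpha}_j-\ln\int_{I_k}\widehat{f}_j$; since the negative logarithm of the larger quantity is the smaller, these two bounds may be averaged with any convex weights, and I would use $w_{kj}=a_k^{-1}\sum_{n: x_n\in I_k}\widehat{m}_{nj}$ (the average soft membership in bin $k$), so that $w_{k1}+w_{k2}=1$ and a bin lying entirely in $\mathcal{A}$ or $\mathcal{C}$ simply keeps its dominant component. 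Next, Jensen's inequality for the concave $\ln$ against the uniform probability on $I_k$ gives $-\ln\int_{I_k}\widehat{f}_j\le-\ln W_N+\frac{1}{W_N}\int_{I_k}\bigl(-\ln\widehat{f}_j(x)\bigr)\,dx$, and since $-\ln\widehat{f}_j(x)=\frac12\ln(2\pi\widehat{\sigma}_j^2)+\frac{(x-\widehat{\mu}_j)^2}{2\widehat{\sigma}_j^2}$ and $\int_{I_k}(x-\widehat{\mu}_j)^2\,dx=W_N(y_k-\widehat{\mu}_j)^2+W_N^3/12$, this yields $-\ln\int_{I_k}\widehat{f}_j\le-\ln W_N+\frac12\ln(2\pi\widehat{\sigma}_j^2)+\frac{(y_k-\widehat{\mu}_j)^2}{2\widehat{\sigma}_j^2}+\frac{W_N^2}{24\widehat{\sigma}_j^2}$.

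Now multiply by $a_k/N$ and sum over $k$. Because $\sum_k a_k w_{kj}=\sum_n \widehat{m}_{nj}=\widehat{\alpha}_j N$ by the definitions of the estimated parameters (see Corollary~\ref{theorem_optimal_char_two}) and $\sum_k a_k = N$, the $-\ln\widehat{\alpha}_j$ contributions assemble into $H(\widehat{\alpha}_1,\widehat{\alpha}_2)$, the $-\ln W_N$ contribution survives as $-\ln W_N$, the $\frac12\ln(2\pi\widehat{\sigma}_j^2)$ contributions become $\frac{\widehat{\alpha}_1}{2}\ln(2\pi\widehat{\sigma}_1^2)+\frac{\widehat{\alpha}_2}{2}\ln(2\pi\widehat{\sigma}_2^2)$, and the $\frac{W_N^2}{24\widehat{\sigma}_j^2}$ contributions are dominated by $\bigl(\frac{\widehat{\alpha}_1}{2\widehat{\sigma}_1^2}+\frac{\widehat{\alpha}_2}{2\widehat{\sigma}_2^2}\bigr)W_N^2$; together with the $-\ln c_P$ coming out of the normalization, all of this already matches the stated bound except for the leftover term $\sum_j \frac{1}{2\widehat{\sigma}_j^2 N}\sum_n \widehat{m}_{nj}(y_{k(n)}-\widehat{\mu}_j)^2$. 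The goal is to show this leftover is at most $\frac{\widehat{\alpha}_1+\widehat{\alpha}_2}{2}=\frac12$ up to an $O(W_N^2)$ error --- this $\frac12$ being exactly what upgrades each $\ln(2\pi\widehat{\sigma}_j^2)$ to $\ln(2\pi e\widehat{\sigma}_j^2)$. Here I would invoke the exact identity $\sum_n \widehat{m}_{nj}(x_n-\widehat{\mu}_j)^2=\widehat{\alpha}_j N\widehat{\sigma}_j^2$ (again Corollary~\ref{theorem_optimal_char_two}), write $y_{k(n)}=x_n+\delta_n$ with $|\delta_n|\le W_N/2$, and expand the square.

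The step I expect to be the main obstacle is precisely this last estimate. Expanding gives $\sum_n \widehat{m}_{nj}(y_{k(n)}-\widehat{\mu}_j)^2=\widehat{\alpha}_j N\widehat{\sigma}_j^2+2\sum_n \widehat{m}_{nj}(x_n-\widehat{\mu}_j)\delta_n+\sum_n \widehat{m}_{nj}\delta_n^2$; the last sum is at most $\widehat{\alpha}_j N W_N^2/4$ and is harmless, and $\sum_n \widehat{m}_{nj}(x_n-\widehat{\mu}_j)=0$ since $\widehat{\mu}_j$ is the empirical mean of the samples assigned to class $j$, but the mixed sum $\sum_n \widehat{m}_{nj}(x_n-\widehat{\mu}_j)\delta_n$ does not vanish on its own; a direct Cauchy--Schwarz estimate against $\widehat{\alpha}_j N\widehat{\sigma}_j^2$ controls it only to within order $NW_N$, so obtaining the sharper $NW_N^2$ accuracy asserted in the stated form requires the bin geometry to be used more carefully (within a bin the offsets $\delta_n$ largely cancel while the residuals $x_n-\widehat{\mu}_j$ are nearly constant), though a correction of order $W_N$ would already suffice for the asymptotic conclusions drawn afterward. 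Finally, a routine bookkeeping point: the at most four bins that meet the boundary set $\mathcal{B}$ --- which, being the zero set of the quadratic $\ln(\widehat{\alpha}_1\widehat{f}_1)-\ln(\widehat{\alpha}_2\widehat{f}_2)$, contains at most two points --- together with the factor $c_P\to 1$, contribute only vanishing corrections and are absorbed.
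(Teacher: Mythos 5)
Your proposal follows essentially the same route as the paper's proof: rewrite the left-hand side as the cross entropy $-\sum_k (a_k/N)\ln P_N(k;\widehat{\Theta})$, discard the non-dominant mixture component in each bin with the soft-membership weights (your $w_{kj}$ are exactly the paper's $c_k$, $1-c_k$), pull out the factors $\widehat{\alpha}_j/(\sqrt{2\pi}\widehat{\sigma}_j)$ to produce $H(\widehat{\alpha}_1,\widehat{\alpha}_2)$ and the $\tfrac12\ln(2\pi\widehat{\sigma}_j^2)$ terms, and then use the identities of Corollary \ref{theorem_optimal_char_two} to convert $\sum_n\widehat{m}_{nj}(x_n-\widehat{\mu}_j)^2$ into $\widehat{\alpha}_jN\widehat{\sigma}_j^2$, which supplies the extra $e$. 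The only divergence is how the bin integral is bounded: you use Jensen against the uniform density on $I_k$ (producing the bin center $y_k$), whereas step (b) of Eq.\ \ref{long_bound_one} uses the mean-value theorem (producing an unknown $\xi\in I_k$ compared against $x_n$). The obstacle you flag --- the cross term $\sum_n\widehat{m}_{nj}(x_n-\widehat{\mu}_j)\delta_n$, controllable only to order $NW_N$ --- is genuine, but you should be aware that the paper's proof does not truly escape it: in step (b) the quantity $(\xi-\widehat{\mu}_j)^2$ is replaced by $(x_n-\widehat{\mu}_j)^2+\bigl(2(x_n-\widehat{\mu}_j)+W_N\bigr)W_N=(x_n-\widehat{\mu}_j+W_N)^2$, which dominates $\max_{\xi\in I_k}(\xi-\widehat{\mu}_j)^2$ only when $x_n\geq\widehat{\mu}_j$; the correct uniform bound carries $2|x_n-\widehat{\mu}_j|W_N$, whose weighted sum no longer cancels and is again only $O(W_N/\widehat{\sigma}_j)$ by Cauchy--Schwarz. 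So both arguments, carried out rigorously, establish the lemma with an additional correction of order $W_N$ rather than $W_N^2$; as you note, $W_N\rightarrow 0$, so this weaker form still suffices for Theorems \ref{error_exp_main_theorem} and \ref{prob_one_theorem}. Your reconstruction is acceptable, with the residual term stated honestly rather than silently cancelled.
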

\begin{proof}\emph{(sketch)}
The  bound is proved in Eq. \ref{long_bound_one}, where (a) follows
from the fact that $\ln(\cdot)$ is an increasing function, (b)
follows from the mean-value theorem, and (c) follows from Eqs.
\ref{optimal_cond_9}, \ref{optimal_cond_10}.

\begin{figure*}
\begin{align}
\label{long_bound_one} &
D\left(a_k/N||P_N(k;\widehat{\Theta})\right)+H\left(a_k/N\right)
 = \sum_k \frac{-a_k}{N}\ln\left(P_N(k;\widehat{\Theta})\right)
 \notag \\
 & = \sum_k
 \frac{-a_k}{N}\ln\left[\int_{(k-1/2)W_N}^{(k+1/2)W_N}
\frac{\widehat{\alpha}_1}{\sqrt{2\pi}\widehat{\sigma}_1}\exp\left(\frac{-(x-\widehat{\mu}_1)^2}{2\widehat{\sigma}_1^2}\right)+
\frac{\widehat{\alpha}_2}{\sqrt{2\pi}\widehat{\sigma}_2}\exp\left(\frac{-(x-\widehat{\mu}_2)^2}{2\widehat{\sigma}_2^2}\right)
dx\right]  -\ln(c_P) \notag \\
& \stackrel{(a)}{\leq } \sum_{k\in {\mathcal A}}
\frac{-a_k}{N}\ln\left[\int_{(k-1/2)W_N}^{(k+1/2)W_N}
\frac{\widehat{\alpha}_1}{\sqrt{2\pi}\widehat{\sigma}_1}\exp\left(\frac{-(x-\widehat{\mu}_1)^2}{2\widehat{\sigma}_1^2}\right)dx\right]
+ \sum_{k\in {\mathcal B}}
\frac{-c_ka_k}{N}\ln\left[\int_{(k-1/2)W_N}^{(k+1/2)W_N}
\frac{\widehat{\alpha}_1}{\sqrt{2\pi}\widehat{\sigma}_1}\exp\left(\frac{-(x-\widehat{\mu}_1)^2}{2\widehat{\sigma}_1^2}\right)dx\right]
\notag \\
 &
 + \sum_{k\in {\mathcal B}}
\frac{-(1-c_k)a_k}{N}\ln\left[\int_{(k-1/2)W_N}^{(k+1/2)W_N}
\frac{\widehat{\alpha}_2}{\sqrt{2\pi}\widehat{\sigma}_2}\exp\left(\frac{-(x-\widehat{\mu}_2)^2}{2\widehat{\sigma}_2^2}\right)dx\right]
\notag \\ &   + \sum_{k\in {\mathcal C}}
\frac{-a_k}{N}\ln\left[\int_{(k-1/2)W_N}^{(k+1/2)W_N}
\frac{\widehat{\alpha}_2}{\sqrt{2\pi}\widehat{\sigma}_2}\exp\left(\frac{-(x-\widehat{\mu}_2)^2}{2\widehat{\sigma}_2^2}\right)
dx\right]
-\ln(c_P)   \notag \\
& = \sum_{k\in {\mathcal A}}
\frac{-a_k}{N}\ln\left[\int_{(k-1/2)W_N}^{(k+1/2)W_N}
\exp\left(\frac{-(x-\widehat{\mu}_1)^2}{2\widehat{\sigma}_1^2}\right)dx\right]
+ \sum_{k\in {\mathcal B}}
\frac{-c_ka_k}{N}\ln\left[\int_{(k-1/2)W_N}^{(k+1/2)W_N}
\exp\left(\frac{-(x-\widehat{\mu}_1)^2}{2\widehat{\sigma}_1^2}\right)dx\right]
\notag \\
 &
 + \sum_{k\in {\mathcal B}}
\frac{-(1-c_k)a_k}{N}\ln\left[\int_{(k-1/2)W_N}^{(k+1/2)W_N}
\exp\left(\frac{-(x-\widehat{\mu}_2)^2}{2\widehat{\sigma}_2^2}\right)dx\right]
+ \sum_{k\in {\mathcal C}}
\frac{-a_k}{N}\ln\left[\int_{(k-1/2)W_N}^{(k+1/2)W_N}
\exp\left(\frac{-(x-\widehat{\mu}_2)^2}{2\widehat{\sigma}_2^2}\right)
dx\right]   \notag \\
& \hspace{0.5in}+  \frac{\widehat{\alpha}_1}{2}\ln(2\pi
\widehat{\sigma}_1^2) + \frac{\widehat{\alpha}_2}{2}\ln(2\pi
\widehat{\sigma}_2^2) + H(\widehat{\alpha}_1,\widehat{\alpha}_2)  - \ln(c_P) \notag \\
 & \stackrel{(b)}{\leq}
\frac{\sum_{n\in {\mathcal A}} (x_n-\widehat{\mu}_1)^2+
\sum_{n\in{\mathcal B}}\widehat{m}_{n1}(x_n-\widehat{\mu}_1)^2
}{2\widehat{\sigma}_1^2  N} + \frac{\sum_{n\in {\mathcal C}}
(x_n-\widehat{\mu}_2)^2+ \sum_{n\in{\mathcal
B}}\widehat{m}_{n2}(x_n-\widehat{\mu}_2)^2 }{2\widehat{\sigma}_2^2
N} \notag \\ & + \frac{\sum_{n\in {\mathcal A}}
(2x_n-2\widehat{\mu}_1+W_N)W_N+ \sum_{n\in{\mathcal
B}}\widehat{m}_{n1}(2x_n-2\widehat{\mu}_1+W_N)W_N
}{2\widehat{\sigma}_1^2  N} \notag \\ & + \frac{\sum_{n\in {\mathcal
C}} (2x_n-2\widehat{\mu}_2+W_N)W_N  + \sum_{n\in{\mathcal
B}}\widehat{m}_{n2}(2x_n-2\widehat{\mu}_2+W_N)W_N
}{2\widehat{\sigma}_2^2 N} \notag \\ & \hspace{0.5in}+
\frac{\widehat{\alpha}_1}{2}\ln(2\pi
 \widehat{\sigma}_1^2) + \frac{\widehat{\alpha}_2}{2}\ln(2\pi
\widehat{\sigma}_2^2) + H(\widehat{\alpha}_1,\widehat{\alpha}_2)
-\ln(W_N) - \ln(c_P) \notag \\
 & \stackrel{(c)}{=}
  \frac{\widehat{\alpha}_1}{2}\ln(2\pi e \widehat{\sigma}_1^2) +
\frac{\widehat{\alpha}_2}{2}\ln(2\pi e \widehat{\sigma}_2^2) +
H(\widehat{\alpha}_1,\widehat{\alpha}_2)  +
\left(\frac{\widehat{\alpha}_1}{2\widehat{\sigma}_1^2}+\frac{\widehat{\alpha}_2}{2\widehat{\sigma}_2^2}\right)W_N^2
 -\ln (W_N) - \ln(c_P)
\end{align}
\line(1,0){500}
\end{figure*}
\end{proof}

\begin{theorem}
\label{error_exp_main_theorem} Let $\mathcal D$ denote a set of
mixture Gaussian distributions with parameters
$\{\alpha_1,\alpha_2,\mu_1,\mu_2,\sigma_1,\sigma_2\}$, where
$\sigma_i^2$ are lower bounded by a positive constant $B_{\sigma}$.
Assume that the true model distribution
$\Theta^\ast\notin\mathcal{D}$, $\Theta^\ast =
\{\alpha_1^\ast,\alpha_2^\ast,\mu_1^\ast,\mu_2^\ast,\sigma_1^\ast,\sigma_2^\ast\}$.
Define the error exponent $E_r({\mathcal D})= \lim_{N\rightarrow
\infty} \frac{-1}{N}\ln{\mathbb P}\left(\widehat{P}\in{\mathcal
D}\right)$. Let $\mathcal{F}(\mathcal{D})$ denote the set of
probability distribution with well-defined probability density
function $P(x)$, such that, there exists a
$P(x;\widetilde{\Theta})$, $\widetilde{\Theta}\in \mathcal{D}$,
$\widetilde{\Theta}=\{\widetilde{\alpha}_1,\widetilde{\alpha}_2,\widetilde{\mu}_1,
\widetilde{\mu}_2, \widetilde{\sigma}_1, \widetilde{\sigma}_2\}$ and
\begin{align}
& D(P(x)||P(x;\widetilde{\Theta}))+H(P(x)) \notag \\
& \leq \frac{\widetilde{\alpha}_1}{2}\ln(2\pi e
\widetilde{\sigma}_1^2) + \frac{\widetilde{\alpha}_2}{2}\ln(2\pi e
\widetilde{\sigma}_2^2) +
H(\widetilde{\alpha}_1,\widetilde{\alpha}_2).
\end{align}
Then $E_r({\mathcal D})\geq E_b$, where
\begin{align}
 E_b({\mathcal D}) = \min_{P\in \mathcal{F}(\mathcal{D})}
D(P(x)||P(x;\Theta^\ast))
\end{align}
\end{theorem}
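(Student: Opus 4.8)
\emph{(proof plan)}
The plan is to combine the method of types with Lemmas~\ref{outlier_lemma} and~\ref{type_class_bound}, and then pass to the continuum limit. Write $\widehat\Theta$ for the parameters estimated by the algorithm, so $\widehat P=P(\cdot;\widehat\Theta)$, and $P^\ast=P(\cdot;\Theta^\ast)$ for the true mixture. First I would condition on $\mathcal O_N^c$; there every sample lies in $[-M_N,M_N]$, the quantized histogram $\tau_N=\{a_k/N\}$ is a well-defined type over the size-$L_N$ alphabet, and the conditional generating distribution of a sample over the bins is exactly $P_N(\cdot;\Theta^\ast)$. If moreover $\widehat\Theta\in\mathcal D$, Lemma~\ref{type_class_bound} forces $\tau_N$ into the deterministic set $\mathcal G_N=\{\tau:\exists\,\widetilde\Theta\in\mathcal D \text{ with } D(\tau||P_N(\cdot;\widetilde\Theta))+H(\tau)\le R_N(\widetilde\Theta)\}$, where $R_N(\widetilde\Theta)$ is the right-hand side of the inequality in Lemma~\ref{type_class_bound}. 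Hence $\{\widehat P\in\mathcal D\}\subseteq\mathcal O_N\cup\{\tau_N\in\mathcal G_N\}$.

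Second, a union bound over type classes together with the method of types estimate $\mathbb P(\mathcal T(\tau,N))\le\exp(-ND(\tau||P_N(\cdot;\Theta^\ast)))$ gives
\begin{align}
\mathbb P(\widehat P\in\mathcal D)\ \le\ \mathbb P(\mathcal O_N)+(N+1)^{L_N}\exp\!\left(-N\min_{\tau\in\mathcal G_N}D\!\left(\tau||P_N(\cdot;\Theta^\ast)\right)\right).\notag
\end{align}
With $L_N=2\lfloor N^{1-\eta}/\log N\rfloor+1$ one has $(N+1)^{L_N}\sim 1$, and by Lemma~\ref{outlier_lemma} the term $\mathbb P(\mathcal O_N)$ is super-exponentially small, hence negligible. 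Therefore $\liminf_{N\to\infty}\tfrac{-1}{N}\ln\mathbb P(\widehat P\in\mathcal D)\ge\liminf_{N\to\infty}\min_{\tau\in\mathcal G_N}D(\tau||P_N(\cdot;\Theta^\ast))$, and it remains to show that this $\liminf$ is at least $E_b(\mathcal D)=\min_{P\in\mathcal F(\mathcal D)}D(P||P^\ast)$.

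Third, the continuum limit. I would take near-optimal $\tau_N\in\mathcal G_N$ (with $D(\tau_N||P_N(\cdot;\Theta^\ast))$ bounded, otherwise there is nothing to prove), a witnessing $\widetilde\Theta_N\in\mathcal D$, and identify $\tau_N$ with the piecewise-constant density $\bar P_N(x)=\tau_N(k)/W_N$ on the $k$-th bin. The useful observation is that $D(\tau||P_N(\cdot;\widetilde\Theta))+H(\tau)=\sum_k(a_k/N)(-\ln P_N(k;\widetilde\Theta))$ is a cross-entropy, and since $W_N=2M_N/L_N\to0$ and $c_P\to1$ it equals $-\ln W_N+\mathbb E_{\bar P_N}[-\ln P(\cdot;\widetilde\Theta_N)]+o(1)$; thus the divergent $-\ln W_N$ cancels the $-\ln W_N$ inside $R_N$, leaving a finite constraint. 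Similarly $D(\tau_N||P_N(\cdot;\Theta^\ast))=D(\bar P_N||P^\ast)+o(1)$ (joint lower semicontinuity of relative entropy plus $L^1$-convergence of the binned, renormalized $P^\ast$ to $P^\ast$), and boundedness of this quantity forces a uniform second-moment bound on $\bar P_N$ via the Gaussian tails of $P^\ast$, hence tightness. Passing to a subsequence with $\bar P_N\to P$ weakly and $\widetilde\Theta_N\to\widetilde\Theta\in\mathcal D$ (taking $\mathcal D$ closed and excluding escape of the witnessing parameters), lower semicontinuity of the cross-entropy for integrands bounded below (here $-\ln P(x;\widetilde\Theta)\ge\tfrac12\ln(2\pi B_\sigma)$) and continuity of the reduced right-hand side in $\widetilde\Theta$ give $D(P||P(\cdot;\widetilde\Theta))+H(P)\le\tfrac{\widetilde\alpha_1}{2}\ln(2\pi e\widetilde\sigma_1^2)+\tfrac{\widetilde\alpha_2}{2}\ln(2\pi e\widetilde\sigma_2^2)+H(\widetilde\alpha_1,\widetilde\alpha_2)$, i.e. $P\in\mathcal F(\mathcal D)$; finally $\liminf_N D(\tau_N||P_N(\cdot;\Theta^\ast))\ge D(P||P^\ast)\ge E_b(\mathcal D)$.

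The main obstacle is this last step: making rigorous the continuum limit over a \emph{growing} alphabet — in particular, that the discrete cross-entropies and discrete entropies converge to their differential counterparts with the $\ln W_N$ terms cancelling exactly, and that the tightness and compactness needed to extract a limiting pair $(P,\widetilde\Theta)$ with $P\in\mathcal F(\mathcal D)$ indeed hold (the latter may call for a mild extra hypothesis on $\mathcal D$, such as closedness). The reduction to the deterministic type constraint $\mathcal G_N$ and the type-counting bound are routine once Lemmas~\ref{outlier_lemma} and~\ref{type_class_bound} are in hand.
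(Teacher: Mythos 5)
Your proposal follows essentially the same route as the paper's own proof: discard $\mathcal{O}_N$ via Lemma~\ref{outlier_lemma}, use Lemma~\ref{type_class_bound} to confine the type $\{a_k/N\}$ to a constraint set $\mathcal{G}$, apply the type-counting bound with $(N+1)^{L_N}\sim 1$, and pass to the continuum by identifying the (near-)optimal type with a piecewise-constant density that lands in (an $\epsilon$-relaxation of) $\mathcal{F}(\mathcal{D})$. The only difference is presentational: where the paper closes with a bare appeal to ``continuity of divergence and entropy'' to send $\epsilon_1,\epsilon_2\to 0$, you make the same step via tightness, weak compactness, and lower semicontinuity, which is a more honest account of what actually needs to be checked.
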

\begin{proof}\emph{(sketch)}
According to Lemma \ref{outlier_lemma}, the exponent of the random
event $\mathbb{P}(\mathcal{O}_N)$ is infinity. Therefore,
\begin{align}
\mathbb{P}(\widehat{\Theta}\in \mathcal{D}) & =
\mathbb{P}(\widehat{\Theta}\in \mathcal{D},\mathcal{O}_N^c)+
\mathbb{P}(\widehat{\Theta}\in
\mathcal{D}|\mathcal{O}_N)\mathbb{P}(\mathcal{O}_N) \notag \\
& \sim \mathbb{P}(\widehat{\Theta}\in \mathcal{D},\mathcal{O}_N^c).
\end{align}

Let $\mathcal{F}(\mathcal{D},\epsilon)$ denote the set of
probability distribution with probability density function $P(x)$,
such that, there exists $P(x;\widetilde{\Theta})$,
$\widetilde{\Theta}\in \mathcal{D}$,
$\widetilde{\Theta}=\{\widetilde{\alpha}_1,\widetilde{\alpha}_2,\widetilde{\mu}_1,
\widetilde{\mu}_2, \widetilde{\sigma}_1, \widetilde{\sigma}_2\}$,
and
\begin{align}
& D(P(x)||P(x;\widetilde{\Theta}))+H(P(x)) \notag \\
& \leq \frac{\widetilde{\alpha}_1}{2}\ln(2\pi e
\widetilde{\sigma}_1^2) + \frac{\widetilde{\alpha}_2}{2}\ln(2\pi e
\widetilde{\sigma}_2^2) +
H(\widetilde{\alpha}_1,\widetilde{\alpha}_2) +\epsilon.
\end{align}

Let $\mathcal{G}(\mathcal{D},N,\epsilon)$ denote the set of type $P$
of sequences with length $N$, such that, there exists
$P(x;\widetilde{\Theta})$, $\widetilde{\Theta}\in \mathcal{D}$,
$\widetilde{\Theta}=\{\widetilde{\alpha}_1,\widetilde{\alpha}_2,\widetilde{\mu}_1,
\widetilde{\mu}_2, \widetilde{\sigma}_1, \widetilde{\sigma}_2\}$,
and
\begin{align}
& D(P(x)||P_N(k;\widetilde{\Theta}))+H(P(x)) \notag \\
& \leq \sum_{i=1}^{2}\frac{\widetilde{\alpha}_i}{2}\ln(2\pi e
\widetilde{\sigma}_i^2) +
H(\widetilde{\alpha}_1,\widetilde{\alpha}_2) -\ln W_N+\epsilon.
\end{align}

According to Lemma \ref{type_class_bound}, if $\widehat{\Theta}\in
\mathcal{D}$, and $\mathcal{O}_N$ does not occur, then the type
$\{a_k/N\}\in \mathcal{G}(\mathcal{D},N,\epsilon)$. Therefore,
\begin{align}
\mathbb{P}(\widehat{\Theta}\in \mathcal{D},\mathcal{O}_N^c) & \leq
\sum_{P\in \mathcal{G}(\mathcal{D},N,\epsilon)}
\mathbb{P}(\mathcal{T}(P,N)) \notag \\
& \stackrel{(a)}{\sim} \max_{P\in
\mathcal{G}(\mathcal{D},N,\epsilon)} \mathbb{P}(\mathcal{T}(P,N)) \notag \\
& \stackrel{(b)}{\sim} \exp\left(- \min_{P\in
\mathcal{G}(\mathcal{D},N,\epsilon)}D(P||P_N(k;\Theta^\ast))N
\right),
\end{align}
where, (a) follows from the fact that the number of type class is
upper bounded by
\begin{align}
(N+1)^{L_N}\sim 1,
\end{align}
and (b) follows from first principles in the method of types
\cite{csiszar98}.

Let $\{b_k/N\}$ denote the above type, which minimizes
$D(P||P_N(k;\Theta^\ast))$. We can construct a probability
distribution with probability density function $Q(x;N,b_k/N)$ as
follows.
\begin{align}
Q(x;N,b_k/N)=\left\{
\begin{array}{ll}
\frac{b_k}{W_N N}, & \mbox{if }|x|\leq M_N, |x-kW_N|<\frac{W_N}{2}
\\
0, & \mbox{otherwise}
\end{array}
\right.
\end{align}
It can be checked that $Q(x;N,b_k/N)\in
\mathcal{F}(\mathcal{D},\epsilon_1)$, and
\begin{align}
D(b_k/N||P_N(k;\Theta^\ast))\geq D(Q(x;N,b_k/N)||P(x;\Theta^\ast)) -
\epsilon_2,
\end{align}
where $\epsilon_1,\epsilon_2$ are some small positive numbers,
$\epsilon_1, \epsilon_2\rightarrow 0$, as $N\rightarrow \infty$.

As a consequence,
\begin{align}
& \min_{P\in
\mathcal{G}(\mathcal{D},N,\epsilon)}D(P||P_N(k;\Theta^\ast)) \notag
\\
&\geq \min_{P\in \mathcal{F}(\mathcal{D},\epsilon_1)}
D(P(x)||P(x;\Theta^\ast))-\epsilon_2.
\end{align}
Finally, the theorem follows from the fact that all information
divergence and entropy functions are continuous.
\end{proof}

\begin{theorem}
\label{prob_one_theorem} For sufficiently large $N$, with
probability close to one,
\begin{align}
D(P(x;\Theta^\ast)||P(x;\widehat{\Theta}))\leq
H(\alpha_1^\ast,\alpha_2^\ast)+\epsilon,
\end{align}
where $\epsilon$ is a small positive number, $\epsilon\rightarrow
0$, as $N\rightarrow \infty$.
\end{theorem}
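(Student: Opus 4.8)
\emph{(proposal)}
The plan is to bound $D(P(x;\Theta^\ast)||P(x;\widehat{\Theta}))$ through three successive reductions: a law-of-large-numbers step that replaces the divergence by an empirical cross entropy, an application of Lemma~\ref{type_class_bound} that bounds this empirical quantity by a function of the estimated parameters, and an appeal to the global optimality of $\widehat{m}_{ni}$ that controls this function by comparing the objective $G$ at $\widehat{m}$ with its value at the ``oracle'' membership assignment; a short information-theoretic identity then converts the result into the stated form.

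First I would work on the high-probability event $\mathcal{O}_N^c$ (Lemma~\ref{outlier_lemma}) and write $P_N(k;\widehat{\Theta})=c_PW_NP(\xi_k;\widehat{\Theta})$ with $\xi_k$ in the $k$-th bin (mean value theorem), so that $-\sum_k(a_k/N)\ln P(\xi_k;\widehat{\Theta})$ equals $D(a_k/N||P_N(k;\widehat{\Theta}))+H(a_k/N)$ plus the bookkeeping terms $\ln W_N+\ln c_P$. Because the bin width $W_N\sim cN^{\zeta+\eta-1/2}\log N\to0$ (using $\zeta+\eta<1/2$), this sum agrees up to $o(1)$ with $-\tfrac1N\sum_n\ln P(x_n;\widehat{\Theta})$, which by a law of large numbers uniform over the range of $\widehat{\Theta}$ concentrates around $D(P(x;\Theta^\ast)||P(x;\widehat{\Theta}))+H(P(x;\Theta^\ast))$. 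Feeding in Lemma~\ref{type_class_bound} (whose $-\ln W_N$ cancels the $\ln W_N$ produced above, whose $-\ln c_P$ cancels $\ln c_P$, and whose $(\widehat{\alpha}_1/2\widehat{\sigma}_1^2+\widehat{\alpha}_2/2\widehat{\sigma}_2^2)W_N^2$ residual is $o(1)$ once the estimated variances are bounded below), I expect to obtain, with probability tending to one,
\begin{align}
D(P(x;\Theta^\ast)||P(x;\widehat{\Theta}))+H(P(x;\Theta^\ast))\leq\sum_{i=1}^{2}\frac{\widehat{\alpha}_i}{2}\ln(2\pi e\widehat{\sigma}_i^2)+H(\widehat{\alpha}_1,\widehat{\alpha}_2)+\epsilon_1 .
\end{align}

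Next I would note that the right side is exactly $\tfrac12\ln(2\pi e)+\tfrac12\ln G(\widehat{m})$. Since the labeling $m^{\mathrm{o}}_{ni}=1$ iff $x_n$ is drawn from source $i$ is feasible, global optimality gives $\ln G(\widehat{m})\leq\ln G(m^{\mathrm{o}})$, and by the law of large numbers the parameters induced by $m^{\mathrm{o}}$ converge to $\alpha_i^\ast,\mu_i^\ast,\sigma_i^\ast$, so $\ln G(m^{\mathrm{o}})\to2H(\alpha_1^\ast,\alpha_2^\ast)+\sum_i\alpha_i^\ast\ln(\sigma_i^\ast)^2$; hence $\sum_i\frac{\widehat{\alpha}_i}{2}\ln(2\pi e\widehat{\sigma}_i^2)+H(\widehat{\alpha}_1,\widehat{\alpha}_2)\leq\sum_i\frac{\alpha_i^\ast}{2}\ln(2\pi e(\sigma_i^\ast)^2)+H(\alpha_1^\ast,\alpha_2^\ast)+\epsilon_2$ with probability tending to one. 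Finally, letting $S$ be the latent source index so that $H(X|S)=\sum_i\frac{\alpha_i^\ast}{2}\ln(2\pi e(\sigma_i^\ast)^2)$ under $\Theta^\ast$, the chain rule $H(X)=H(X,S)-H(S|X)$ gives
\begin{align}
\sum_{i=1}^{2}\frac{\alpha_i^\ast}{2}\ln(2\pi e(\sigma_i^\ast)^2)+H(\alpha_1^\ast,\alpha_2^\ast)-H(P(x;\Theta^\ast))=H(S|X)\leq H(\alpha_1^\ast,\alpha_2^\ast),
\end{align}
and substituting this into the previous chain yields $D(P(x;\Theta^\ast)||P(x;\widehat{\Theta}))\leq H(\alpha_1^\ast,\alpha_2^\ast)+\epsilon$ with $\epsilon=\epsilon_1+\epsilon_2\to0$.

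The hard part will be the uniform law of large numbers in the first step: since $\widehat{\Theta}$ is data dependent and $-\ln P(x;\Theta)$ is unbounded both in $x$ and as $\widehat{\sigma}_i\to0$, one must first confine $\widehat{\Theta}$ to a fixed compact parameter set --- bounded means and variances bounded below (as in the hypotheses of Theorem~\ref{error_exp_main_theorem}), the variance bound also being what makes the $W_N^2/\widehat{\sigma}_i^2$ residual of Lemma~\ref{type_class_bound} negligible. I expect this confinement to follow from standard empirical-process concentration (no interval of vanishing width can carry a non-vanishing fraction of the samples, so the minimizer cannot create a degenerate low-variance component) together with Lemma~\ref{outlier_lemma}; making it precise, while routine, is where most of the effort lies.
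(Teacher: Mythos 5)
Your proposal follows essentially the same route as the paper: both proofs combine (i) the approximation of $D(P(x;\Theta^\ast)||P(x;\widehat{\Theta}))$ by the empirical/discretized quantity controlled in Lemma~\ref{type_class_bound}, (ii) the optimality of $\widehat{m}_{ni}$ against the oracle labeling to compare $\tfrac12\ln G(\widehat{m})$ with $\sum_i\tfrac{\alpha_i^\ast}{2}\ln(\sigma_i^\ast)^2+H(\alpha_1^\ast,\alpha_2^\ast)$, and (iii) the inequality $H(P(x;\Theta^\ast))\geq\sum_i\tfrac{\alpha_i^\ast}{2}\ln\left(2\pi e(\sigma_i^\ast)^2\right)$, which you apply last (rephrased as $H(S|X)\leq H(S)$, nicely exposing that the true slack is $H(S|X)$) while the paper applies it first. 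The caveats you flag (confining $\widehat{\Theta}$ to a compact set, uniformity of the concentration step) are real but are glossed over in the paper's sketch as well.
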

\begin{proof} \emph{(sketch)}
We define $a_k/N$ and $Q(x;N,a_k/N)$ similarly as in the above. With
probability close to one, $\mathcal{O}_N$ does not occur, and
\begin{align}
\left| D(P(x;\Theta^\ast)||P(x;\widehat{\Theta}))-
D(Q(x;N,a_k/N)||P(x;\widehat{\Theta}))\right| \leq \epsilon_1.
\end{align}
Note that
\begin{align}
H(P(x;\Theta^\ast))\geq
 \frac{\alpha_1^\ast}{2}\ln\left(2\pi e (\sigma_1^\ast)^2\right) +
\frac{\alpha_2^\ast}{2}\ln\left(2\pi e (\sigma_2^\ast)^2\right).
\end{align}
By Lemma \ref{outlier_lemma} and \ref{type_class_bound}, we have
with probability close to one
\begin{align}
& D(P(x;\Theta^\ast)||P(x;\widehat{\Theta})) \notag \\
& \leq
 \frac{\widehat{\alpha}_1}{2}\ln\left( (\widehat{\sigma}_1)^2\right) +
\frac{\widehat{\alpha}_2}{2}\ln\left( (\widehat{\sigma}_2)^2\right)
+
H(\widehat{\alpha}_1,\widehat{\alpha}_2) \notag \\
& -\frac{\alpha_1^\ast}{2}\ln\left( (\sigma_1^\ast)^2\right) -
\frac{\alpha_2^\ast}{2}\ln\left( (\sigma_2^\ast)^2\right) +
 \epsilon_2.
\end{align}
The theorem then follows from the fact that
\begin{align}
 \frac{\widehat{\alpha}_1}{2}\ln\left( (\widehat{\sigma}_1)^2\right) +
\frac{\widehat{\alpha}_2}{2}\ln\left( (\widehat{\sigma}_2)^2\right)
+ H(\widehat{\alpha}_1,\widehat{\alpha}_2) = \min \frac{\ln(G)}{2};
\end{align}
with probability close to one
\begin{align}
& \frac{\widehat{\alpha}_1}{2}\ln\left(
(\widehat{\sigma}_1)^2\right) +
\frac{\widehat{\alpha}_2}{2}\ln\left( (\widehat{\sigma}_2)^2\right)
+ H(\widehat{\alpha}_1,\widehat{\alpha}_2) \notag \\
& \leq
 \frac{\alpha_1^\ast}{2}\ln\left( (\sigma_1^\ast)^2\right) +
\frac{\alpha_2^\ast}{2}\ln\left( (\sigma_2^\ast)^2\right) +
H(\alpha_1^\ast,\alpha_2^\ast)+
 \epsilon_3.
\end{align}
In the above, $\epsilon_1,\epsilon_2,\epsilon_3$ are all small
positive real numbers, $\epsilon_1,\epsilon_2,\epsilon_3\rightarrow
0$, as $N\rightarrow \infty$.
\end{proof}

\begin{remark}
Theorem \ref{error_exp_main_theorem}, and \ref{prob_one_theorem}
show that the estimated model parameters converge to the true model
parameters in probability except some bias terms. The bound in
Theorem \ref{error_exp_main_theorem} can be further improved.
However, because the discussion is much involved, we leave it for
future research.
\end{remark}


%

\section{Conclusion}

\label{sec_conclusion}

In this paper, we present an information theoretic performance
analysis of the blind signal classification algorithm proposed in
\cite{ma09}. We show that the obtained classification results in the
algorithm is equivalent to a MAP estimator using the estimated
parametric probability models. We further show that the by-product
model parameter estimation is accurate. These theoretical analysis
suggests that the algorithm has nice performance.

\bibliographystyle{IEEEtran}
\bibliography{isit_2010}

\begin{thebibliography}{10}
\providecommand{\url}[1]{#1}
\csname url@rmstyle\endcsname
\providecommand{\newblock}{\relax}
\providecommand{\bibinfo}[2]{#2}
\providecommand\BIBentrySTDinterwordspacing{\spaceskip=0pt\relax}
\providecommand\BIBentryALTinterwordstretchfactor{4}
\providecommand\BIBentryALTinterwordspacing{\spaceskip=\fontdimen2\font plus
\BIBentryALTinterwordstretchfactor\fontdimen3\font minus
  \fontdimen4\font\relax}
\providecommand\BIBforeignlanguage[2]{{%
\expandafter\ifx\csname l@#1\endcsname\relax
\typeout{** WARNING: IEEEtran.bst: No hyphenation pattern has been}%
\typeout{** loaded for the language `#1'. Using the pattern for}%
\typeout{** the default language instead.}%
\else
\language=\csname l@#1\endcsname
\fi
#2}}

\bibitem{sonka}
M.~Sonka, V.~Hlavac, and R.~Boyle, \emph{Image Processing, Analysis, and
  Machine Vision, (Third Edition)}.\hskip 1em plus 0.5em minus 0.4em\relax
  Thomson Learning, 2008.

\bibitem{xu}
R.~Xu and D.~Wunsch, \emph{Clustering}.\hskip 1em plus 0.5em minus 0.4em\relax
  IEEE Press, 2009.

\bibitem{ma09}
X.~Ma, ``Novel blind signal classification method based on data compression,''
  in \emph{Proc. the 6th International Conference on Information Technology:
  New Generations}, Las Vegas, Nevada, April 27-29 2009.

\bibitem{rissanen84}
J.~Rissanen, ``Universal coding, information, prediction, and estimation,''
  \emph{IEEE Transactions on Information Theory}, vol.~30, no.~4, pp. 629--636,
  July 1984.

\bibitem{rissanen86}
------, ``Stochastic complexity and modeling,'' \emph{Annals of Statistics},
  vol.~14, no.~3, pp. 1080--1100, September 1986.

\bibitem{rissanen96}
------, ``Fisher information and stochastic complexity,'' \emph{IEEE
  Transactions on Information Theory}, vol.~42, no.~1, pp. 40--47, January
  1996.

\bibitem{grunwald}
P.~Grunwald, \emph{The minimum description length principle}.\hskip 1em plus
  0.5em minus 0.4em\relax MIT Press, 2007.

\bibitem{cilibrasi05}
R.~Cilibrasi and P.~Vitanyi, ``Clustering by compression,'' \emph{IEEE
  Transactions on Information Theory}, vol.~51, no.~4, pp. 1523--1545, April
  2005.

\bibitem{cilibrasi04}
R.~Cilibrasi, R.~Wolf, and P.~Vitanyi, ``Algorithmic clustering of music,''
  \emph{Computer Music Journal}, vol.~28, no.~4, pp. 49--67, 2004.

\bibitem{cong01}
E.~K.~P. Chong and S.~H. Zak, \emph{An Introduction to Optimization}.\hskip 1em
  plus 0.5em minus 0.4em\relax New York, NY: John Wiley \& Sons, Inc, 2001.

\bibitem{csiszar98}
I.~Csiszar, ``The method of types,'' \emph{IEEE Transactions on Information
  Theory}, vol.~44, no.~6, pp. 2505--2523, October 1998.

\bibitem{proakis}
J.~Proakis, \emph{Digital Communications, 4th Edition}.\hskip 1em plus 0.5em
  minus 0.4em\relax McGraw-Hill, 2000.

\end{thebibliography}


\begin{thebibliography}{1}


\bibitem{Shannon1948}
C. E. Shannon, ``A mathematical theory of communication,''
\emph{Bell Syst.\ Tech.\ J.}, vol.\ 27, pt.~I, pp.~379--423, 1948;
     pt.~II, pp.~623--656, 1948.


\end{thebibliography}

\end{document}